\newenvironment{proof-sketch}{%
  \proof}{\endproof}
\begin{document}
\title{Authenticated teleportation with one-sided trust}
\author{Anupama Unnikrishnan}
\affiliation{Clarendon Laboratory, University of Oxford, Parks Road, Oxford OX1 3PU, United Kingdom}
\author{Damian Markham}
\affiliation{Laboratoire d'Informatique de Paris 6, CNRS, Sorbonne Universit\'e, 4 Place Jussieu, 75005 Paris}

\begin{abstract}
We introduce a protocol for authenticated teleportation, which can be proven secure even when the receiver does not trust their measurement devices, and is experimentally accessible. 
We use the technique of self-testing from the device-independent approach to quantum information, where we can characterise quantum states and measurements from the exhibited classical correlations alone. 
First, we derive self-testing bounds for the Bell state and Pauli $\sigma_X, \sigma_Z$ measurements, that are robust enough to be implemented in the lab. 
Then, we use these to determine a lower bound on the fidelity of an untested entangled state to be used for teleportation. 
Finally, we apply our results to propose a protocol for one-sided device-independent authenticated teleportation that is experimentally feasible in both the number of copies and fidelities required. 
This can be interpreted as a first practical authentication of a quantum channel, with additional one-sided device-independence.
\end{abstract}

\newtheorem{theorem}{Theorem}
\newtheorem{lemma}[theorem]{Lemma}
\newtheorem{corollary}[theorem]{Corollary}
\newtheorem{definition}{Definition}

\newtheorem{manualtheoreminner}{Theorem}
\newenvironment{manualtheorem}[1]{%
  \renewcommand\themanualtheoreminner{#1}%
  \manualtheoreminner
}{\endmanualtheoreminner}

\newtheorem{manuallemmainner}{Lemma}
\newenvironment{manuallemma}[1]{%
  \renewcommand\themanuallemmainner{#1}%
  \manuallemmainner
}{\endmanuallemmainner}

\maketitle

Quantum teleportation is well-established as a cornerstone of the field of quantum information, allowing the transfer of a qubit from one party to another using an entangled pair and a classical communication channel \cite{Bennett1993}. While interesting in its own right, it is also a key ingredient in many protocols, such as secret sharing \cite{Hillery1999, Cleve}, anonymous transmission \cite{Christandl2005} and multiparty computation \cite{Crepeau2002}, and is an important tool across quantum information.

From a cryptographic point of view, it is then vital to study the security of teleportation. We consider \textit{authenticated teleportation}, where we wish to verify that the teleportation has succeeded even when we do not trust the entangled pair being used. This authenticates its application as a quantum channel between the two parties. Previous schemes for authentication of a quantum channel, such as \cite{Barnuma}, rely on generating large entangled states or performing entangling measurements. In order to guarantee high security of such protocols, one would need levels of entanglement that are, in practice, unfeasible. 
We solve this problem and go one step further: allowing Alice and Bob to authenticate their quantum channel even when their devices may not be trusted. 

Device-independence has become a highly desirable feature of quantum communication and computation protocols, from its early applications to quantum key distribution \cite{Barrett2005, Acin2007, Vazirani2014} and quantum random number generation \cite{Colbeck2006, Pironio2010}. This approach addresses the situation where the untrusted components may have been obtained from, or be in the control of, an adversary. In the two party setting, one can consider two trust settings: one-sided device-independent (1sDI), where Alice trusts her device but Bob does not (or vice versa), and fully device-independent (DI), where neither party trusts their devices. 
One-sided trust should allow for a much simpler experimental implementation \cite{Branciard2012, Walk2016, Passaro2015}. This scenario is particularly relevant when one party is naturally trusted, for example, a trusted client but untrusted server, or simply if the channel and local measurement device are untrusted when one wishes to receive a resource (for example, a magic state for computation, or a particular entangled state for metrology).

In this work, we introduce a 1sDI protocol for authenticated teleportation. 
To achieve this, we present a protocol and security bounds for testing the entangled state which will be used for the teleportation, in a 1sDI way. The idea is essentially to request many entangled states, all but one of which are used as test runs, and the remaining one used for teleportation. Note that here, as in all authentication of quantum channels \cite{Barnuma}, one necessarily assumes an authenticated classical channel (or equivalently, a secure shared random key). Along the way, we will present some results for the full DI setting and discuss their application at the end. 
Finally, we argue that our protocol can be implemented with today's experimental capabilities. 

\textit{Outline.---} Building our protocol requires a few key ingredients. Firstly, we use the technique of \textit{self-testing}, by which untrusted states and measurements can be characterised, in a device-independent scenario, by the exhibited correlations alone. 
From its beginnings by Mayers and Yao \cite{Mayers2004}, self-testing results now encompass a variety of different states \cite{Mckaguea, McKague, Wu2014, Pal2014, Baccari2017, Coladangelo2017}, measurements \cite{Kaniewski2017, Bancal2015, Bowles2018}, and trust settings \cite{Gheorghiu, Supic}. For our application to teleportation, we focus on certifying the fidelity of the Bell state and the Pauli measurements.

By self-testing the Bell state, one can establish the closeness of an untrusted state with the Bell state, up to local isometry, from correlations such as the amount of violation of the Clauser-Horne-Shimony-Holt (CHSH) inequality \cite{Clauser1969}. 
While many self-testing results exist for the Bell state  \cite{Reichardt2013, Mckaguea, Gheorghiu}, only recently do we have techniques robust enough to be used in practice \cite{Yang, Bancal2015,Kaniewski2016}.
Self-testing a measurement certifies that, despite the possible presence of additional ancillae, the measurement operators act close to ideally on the Bell state, and trivially on the rest of the system.
There are several results in this direction \cite{Mckaguea, Gheorghiu, Supic, Kaniewski2017, Bancal2015}, however, no work so far has obtained a practically robust self-testing bound for the measurements. 

We start by expanding on the method introduced in \cite{Bancal2015}, known as the SWAP method, to derive full self-testing bounds for the Bell pair that incorporate both the state and Pauli $\sigma_X, \sigma_Z$ measurements, in both the 1sDI and DI settings. Our new bounds are practically robust. 
However, in this approach they are not immediately adapted to our requirements for several reasons. Firstly, they are based on the common self-testing assumptions of an infinite number of independent runs, throughout which the untrusted components behave identically (iid). 
In a realistic, adversarial scenario, we cannot rely on such assumptions, and so by adapting the methods of \cite{Gheorghiu, Hajdusek}, we remove these entirely. 
Secondly, the security statements that one achieves refer to the states that have been measured (hence cannot naively be used for ensuring the quality of the entangled pair used for teleportation). To address this, we incorporate a final, untested state in our analysis, and obtain a bound on the fidelity of teleportation using this state. 
This marks a departure from the usual self-testing works, which certify a state already consumed in the testing process. Recent work by Arnon-Friedman and Bancal \cite{Arnon-Friedman2017} also considered this in their non-iid entanglement certification scheme, however they focus on certifying distillable entanglement rather than quantifying the closeness of a state to the ideal. 

To demonstrate the experimental feasibility of our protocol, we give explicit values of the parameters we require to certify any fidelity of the teleportation, giving both the number of states and the violation of the inequality required. These results are within current experimental limitations \cite{Altepeter2005, Christensen2013, Kaiser2014, Saunders2010, Orieux2017, Abellan2018}.

\textit{Self-testing.---} 
Our first goal is to derive full, robust self-testing bounds for the Bell pair $(\ket{00} + \ket{11})/\sqrt{2}$ and Pauli measurements $\sigma_X, \sigma_Z$, thus adding to the work of \cite{Supic} in the 1sDI setting, and \cite{Bancal2015} in the DI setting. In order to do so, we use the same framework as defined there. 

In a device-independent scenario, one can only certify the state and measurements up to local isometry. Any local unitary transformation, or the presence of additional systems upon which the measurements act trivially, would result in the same correlations and so would not be detected. The dimension of the Hilbert space of the untrusted side is unrestricted, thus we can take the state to be pure and the measurements to be projective without loss of generality.

We start with considering the 1sDI setting. In this setting, it is relevant to look at steering correlations \cite{Wiseman2007}. Colloquially speaking, a quantum system is said to exhibit \textit{steering} if one party making local measurements can steer the qubit of the other party to a certain state. Analogous to the Bell inequality case, one can express this capacity via inequalities. For example, if Alice measures either one of the observables $A_0, A_1$, and Bob in one of $B_0, B_1$, one is said to observe steering if the following inequality \cite{Cavalcanti2009} is violated:
\begin{equation}
    \abs{ \langle A_0 B_0 \rangle + \langle A_1 B_1 \rangle } \leq \sqrt{2}. \label{EQN: steering inequality}
\end{equation} 
A violation of the inequality implies entanglement between the two parties. Further, a maximal violation of 2 can be demonstrated for the Bell pair under consideration if the observables are $A_0 = B_0 = \sigma_X, A_1 = B_1 = \sigma_Z$. 
We now give the definition of self-testing that we wish to make for the 1sDI setting, where the fidelity $F(\ket{\Theta}, \ket{\Omega}) = \abs{ \bra{\Theta}\ket{\Omega} }^2$. 

\begin{definition}
Let $\ket{\overline{\psi}}$ denote the ideal Bell state, and $\overline{M}_A, \overline{N}_B \in \{ \sigma_X, \sigma_Z \}$ be the ideal Pauli measurements of Alice and Bob respectively. Let $\ket{\psi}$ denote the untrusted shared state, and $N_B \in \{ X_B, Z_B \}$ be Bob's untrusted measurements. Given the parties observe a near-maximal violation $2-\epsilon$ of the steering inequality (\ref{EQN: steering inequality}), 
we have achieved robust self-testing if there exists some isometry $\Phi: \mathcal{H}_B \rightarrow \mathcal{H}_B \otimes \mathcal{H}_{B'}$ such that
\begin{align}
F \big(  \emph{tr}_B [\Phi(\ket{\psi}\bra{\psi})],  \ket{\overline{\psi}}  \big) & \geq 1 - f(\epsilon) \label{EQN: definition state self-testing} \\
F \big( \emph{tr}_B [ \Phi( \ket{\psi'}\bra{\psi'} ) ], \overline{M}_A \otimes \overline{N}_B \ket{\overline{\psi}} \big) & \geq 1 - f'(\epsilon) \label{EQN: definition measurement self testing}
\end{align}
where $\ket{\psi'} = \overline{M}_A \otimes N_B \ket{\psi}$ and $f(\epsilon)$, $f'(\epsilon)$ are known simple functions of $\epsilon$.
\label{def:selftesting}
\end{definition}

Note that, as is typical for self-testing statements, this bound effectively assumes an infinite number of iid copies of the state and measurements in order to approximate the violation and map it to the fidelity, and it is a statement on the state of the measured pairs.

We can define a similar statement for the DI setting, where now Alice's measurement $M_A$ is also untrusted. In this case the parties must observe a near-maximal violation of the CHSH inequality \cite{Clauser1969}, given by $\abs{ \langle A_0 B_0 \rangle + \langle A_1 B_0 \rangle + \langle A_0 B_1 \rangle - \langle A_1 B_1 \rangle } \leq 2$, and maximally violated at the Tsirelson bound of $2\sqrt{2}$ \cite{Cirelson1980}.

Our first aim is to determine forms of the functions $f(\epsilon), f'(\epsilon)$, for both the 1sDI and DI settings, that are practically robust i.e. give a non-trivial fidelity for experimentally observable violations of the inequality. For Equation (\ref{EQN: definition state self-testing}), such bounds for the state are already given in \cite{Supic, Bancal2015}, by semidefinite programming (SDP) and the SWAP isometry. We use their method to derive new bounds which, for the measurements, are significantly more robust than previous analytical results \cite{Supic, Gheorghiu, Mckaguea}.
Our results are given in Theorem \ref{th:selftestingbounds}.

\begin{theorem}
\begin{itemize}
\item[(a)] In the 1sDI setting, if the parties observe a $2 - \epsilon$ violation of the steering inequality (\ref{EQN: steering inequality}), then $
f(\epsilon)  = 1.26 \epsilon, 
f'(\epsilon)  = 3.10 \epsilon$.
\item[(b)] In the DI setting, if the parties observe a $2\sqrt{2} - \epsilon$ violation of the CHSH inequality \cite{Clauser1969}, then $f(\epsilon) = 1.19\epsilon, f'(\epsilon) = 3.70\epsilon$.
\end{itemize}
\label{th:selftestingbounds}
\end{theorem}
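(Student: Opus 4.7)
The plan is to follow and extend the SWAP isometry plus semidefinite programming (SDP) framework of \cite{Bancal2015}, adapting it to simultaneously handle the state bound (\ref{EQN: definition state self-testing}) and the measurement bound (\ref{EQN: definition measurement self testing}) in both the 1sDI and DI settings. On Bob's (untrusted) side I would introduce an ancilla qubit initialised in $\ket{0}$, define a candidate local isometry $\Phi$ built from Hadamards and controlled-operations using his unknown observables $X_B, Z_B$ (the standard SWAP circuit, whose action reduces to an exact SWAP when $X_B, Z_B$ are genuine anticommuting Pauli operators on a qubit), and analogously on Alice's side in the DI case. Tracing out the original Bob register, the fidelity with $\ket{\overline{\psi}}$ expands as a polynomial in moments of the form $\langle A_i B_j \rangle$, $\langle A_i B_j B_k \rangle$, etc. (and the analogue involving only Bob's moments weighted by Alice's trusted $\sigma_X, \sigma_Z$ in the 1sDI case).

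For the state bound $f(\epsilon)$, I would minimise this polynomial expression for the fidelity subject to the single scalar constraint $\langle A_0 B_0\rangle + \langle A_1 B_1\rangle \geq 2-\epsilon$ in the 1sDI case, or the CHSH constraint $\geq 2\sqrt{2}-\epsilon$ in the DI case, using the appropriate moment-matrix relaxation: the NPA hierarchy in the DI setting, and its one-sided analogue (where Alice's algebra is fixed to $2\times 2$ Paulis acting on her reduced state, and only Bob's operators are free) in the 1sDI setting. Running the SDP at level sufficient to close the gap to Tsirelson/steering saturation yields a numerical lower bound on $F$ as a function of $\epsilon$; to obtain the linear certificate $F \geq 1 - c\,\epsilon$ I would either sample $\epsilon$ near $0$ and read off the slope, or, more rigorously, extract a linear Lagrangian/dual certificate from the SDP solution that provides a valid inequality for all $\epsilon \in [0,\epsilon_{\max}]$. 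The quoted coefficients $1.26$ and $1.19$ should emerge from such a linearisation.

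For the measurement bound $f'(\epsilon)$, the setup is the same except the input state is the post-measurement state $\ket{\psi'} = \overline{M}_A \otimes N_B \ket{\psi}$ and the target is $\overline{M}_A \otimes \overline{N}_B \ket{\overline{\psi}}$. Expanding the target fidelity through $\Phi$ introduces additional moments that insert one of Bob's observables between the SWAP operations; these are all bounded and polynomial in the same operator monomials, so the same moment matrix suffices. One then runs a separate SDP minimisation, for each choice $\overline{M}_A, \overline{N}_B \in \{\sigma_X, \sigma_Z\}$, and keeps the worst case. A linear fit / dual certificate then yields the coefficients $3.10$ and $3.70$.

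The main obstacle is ensuring that the linear bounds are actually valid (not just asymptotic slopes at $\epsilon = 0$) and tight enough to be practically useful. SDP hierarchies deliver a numerical curve, but a crude tangent at the origin will generically fail as a global bound, while too conservative a secant will destroy the robustness needed downstream. I expect the real work to lie in choosing a high enough NPA / steering-relaxation level that the problem becomes essentially tight, identifying the correct list of operator monomials (including products involving $\overline{M}_A, N_B$ on the measurement side), and constructing a dual certificate whose linear form in $\epsilon$ is provably valid. A secondary subtlety is that the SWAP isometry is not unique; small adjustments (e.g.\ symmetrising $X_B + Z_B$ terms or using regularised square roots as in \cite{Bancal2015}) can noticeably tighten the constants, so some optimisation over the isometry ansatz will likely be required to reach the numbers claimed.
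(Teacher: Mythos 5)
Your proposal follows essentially the same route as the paper: the SWAP isometry (applied to Bob only in the 1sDI case, to both parties in the DI case), fidelity expressions expanded in assemblage elements or operator moments, and an SDP over the NPA moment matrix minimising the fidelity subject to the observed violation, run separately for the state and for each post-measurement state to extract the linear coefficients. The practical issues you flag (validity of the linear certificate away from $\epsilon=0$, and the bookkeeping of the larger monomial lists needed for the measurement bounds) are indeed where the paper's effort goes — it automates the generation of the moment-matrix constraints, which for the $X\otimes X$ case involves an $81\times 81$ matrix with over $22{,}000$ constraints — so your outline matches the paper's proof in substance.
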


\begin{proof-sketch}  \renewcommand{\qedsymbol}{}
In the 1sDI case, let us denote $E_{\mathsf{b}|\mathsf{y}}$ as the projector associated with Bob measuring in setting $\mathsf{y}$ and getting outcome $\mathsf{b}$. 
Then, Alice's resulting conditional state is represented by the assemblage given by $\tau_{\mathsf{b}|\mathsf{y}} = \text{tr}_B (\mathds{1}_A \otimes E_{\mathsf{b}|\mathsf{y}} \ket{\psi} \bra{\psi})$, which is equal to the resulting state times its probability. 
In the DI case, we also consider Alice's projector $D_{\mathsf{a}|\mathsf{x}}$, such that the probability distribution is given by $p(\mathsf{a}, \mathsf{b} | \mathsf{x}, \mathsf{y}) = \text{tr}_{AB} (D_{\mathsf{a}|\mathsf{x}} \otimes E_{\mathsf{b}|\mathsf{y}} \ket{\psi} \bra{\psi})$.  

We then adopt the techniques of \cite{Bancal2015}, where by applying the SWAP isometry, we determine expressions for the fidelity measures in Equations (\ref{EQN: definition state self-testing}) and (\ref{EQN: definition measurement self testing}). We write our fidelity measures and inequality violations in terms of assemblage elements for the 1sDI setting, and expectation values of combinations of Alice and Bob's measurement operators for the DI setting (see Appendix A for more details).
Next, we introduce a positive semidefinite variable $\Gamma$, which is the moment matrix from the Navascu\'es-Pironio-Ac\'in (NPA) hierarchy \cite{Navascues2007, Navascues2008}, constructed such that now, the fidelity and inequality expressions can be written as a function of $\Gamma$. 
Since we wish to obtain a lower bound on the fidelity measures, we then use an SDP to find the minimum value of our expressions for Equations (\ref{EQN: definition state self-testing}) and (\ref{EQN: definition measurement self testing}) which are compatible with the amount of violation of the inequality. 

As we shift from self-testing the state to the measurements, the size of $\Gamma$ and the difficulty of the problem increases, particularly in constraining the structure of $\Gamma$. We solve this issue by automating the constraint generation step.
The full proof is given in Appendix A.
\end{proof-sketch}

From now on, we will denote the constant term by $\alpha$, such that the bounds in Theorem \ref{th:selftestingbounds} can be written as $f(\epsilon), f'(\epsilon) = \alpha \epsilon$, where $\alpha$ depends on the trust setting and whether we wish to use the self-testing bound for the state, or both state and measurements.

\textit{Authenticated teleportation.---}
We are now ready to build our protocol for authenticated teleportation with untrusted devices. As mentioned, we cannot directly apply our results from Theorem \ref{th:selftestingbounds}, as they are valid under the assumptions of an infinite number of iid rounds (appearing in the use of expectation values in the SDP). 
In any experiment, by doing a finite number of runs we can only get an estimation of the expectation values required for our inequality. Thus, if we want to use Theorem \ref{th:selftestingbounds} to propose a realistic verification protocol, we must also determine the number of runs required to approximate the expectation values to our desired precision. 
We consider the case of the iid assumption, and then remove this altogether for the fully adversarial setting. 
In this way, we define a scheme for 1sDI authenticated teleportation in Protocol \ref{alg:authtelep}.

The main idea of Protocol \ref{alg:authtelep} is as follows. The source is asked for many copies of the Bell pair. One copy is randomly chosen that will be used, and the rest are tested for steering using inequality (\ref{EQN: steering inequality}). If it passes, the remaining pair is used to teleport. If the source and devices behave as they should, the tests will always pass, and the teleported state is perfect. 
On the other hand, if the source is malicious, it cannot know which pair will be used and which pairs will be tested, and so if it supplies non-ideal states, it will sometimes fail the test. Further, the self-testing statements for the test mean that Bob's security holds even with untrusted devices on Bob's side.

Theorem \ref{th:fid} characterises the final untested state that is used as the quantum resource for teleportation, for both the case where we assume an iid source and the non-iid setting. Our proof for the non-iid case is based on techniques from \cite{Gheorghiu} to certify the fidelity of any randomly chosen state, as opposed to at least one state as in \cite{Bancal2015}.

\begin{algorithm}[t]
\caption{1sDI authenticated teleportation} 
\begin{algorithmic}[1]
\STATE 
Parameters $\epsilon, q, x$ are agreed, depending on the experimental limitations, and the fidelity of teleportation that the parties wish to certify. \\ \
\STATE The source is instructed to prepare
\begin{itemize}
\item $K = \lceil{ \frac{4 q^2 x}{\epsilon^2} \log{\frac{1}{\epsilon}} + 1 \rceil}$ Bell pairs in the iid setting, or 
\item $K = \lceil{  \frac{16 q^2 x}{\epsilon^2} \log{\frac{1}{\epsilon}} + 1 \rceil}$ Bell pairs in the non-iid setting, 
\end{itemize} 
and send the shares to Alice and Bob. ($\lceil{.\rceil}$ denotes the ceiling function.)  \\ \

\STATE Alice randomly chooses a pair $r$ to be used for the teleportation, and sends the value $r$ to Bob. \\ \

\STATE Alice randomly divides the set of remaining pairs into two subsets, $\mathbb{S}_0$ and $\mathbb{S}_1$, each of size $\frac{K-1}{2}$. \\ \

\STATE For each pair $i \in \mathbb{S}_t$: 
\begin{enumerate}
\item[(a)] Alice measures observable $A_t$ and gets outcome $a_i$.
\item[(b)]  She tells Bob to measure observable $B_t$ and he gets outcome $b_i$.
\item[(c)] Alice and Bob calculate their correlation for round $i$ as $\hat{C}_i = a_i b_i$. \\ \
\end{enumerate} 
\STATE Alice and Bob calculate their average correlation over all rounds. \\ \
\STATE If their average correlation has deviation $\epsilon$ from maximal violation of the steering inequality (\ref{EQN: steering inequality}), Alice uses pair $r$ to teleport the secret to Bob. 
\end{algorithmic}
\label{alg:authtelep}
\end{algorithm}

\begin{theorem}
The fidelity of the entangled state used for teleportation in step 7 of Protocol \ref{alg:authtelep} (up to local isometry) is bounded in the iid setting with probability at least $(1 - \epsilon^x)$ by
\begin{align}
F \geq 1 - \alpha \Big[ \frac{2\epsilon}{q} + \epsilon \Big],
\end{align}
and in the non-iid setting with probability at least $ (1-\epsilon^x)(1 - \sqrt{ \alpha [ \frac{2\epsilon}{q} + \frac{\epsilon}{2} + \frac{4q^2 x \epsilon \log{\frac{1}{\epsilon}} + 2 \epsilon^2}{8 q^2 x \log{\frac{1}{\epsilon}} +  \epsilon^2} ] })$ by
\begin{align}
F \geq 1 - \sqrt{ \alpha \Big[ \frac{2\epsilon}{q} + \frac{\epsilon}{2} + \frac{4q^2 x \epsilon \log{\frac{1}{\epsilon}} + 2 \epsilon^2}{8 q^2 x \log{\frac{1}{\epsilon}} +  \epsilon^2} \Big]  },
\end{align}
\label{th:fid}
where $\alpha = 1.26$.
\end{theorem}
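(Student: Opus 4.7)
\begin{proof-sketch}
\renewcommand{\qedsymbol}{}
The plan is to reduce each part of Theorem~\ref{th:fid} to the self-testing statement of Theorem~\ref{th:selftestingbounds} by first converting the empirically estimated correlation into a bound on the true quantum expectation of the steering inequality, and then arguing that this bound transfers to the untested pair $r$. The two parts differ only in how these two steps are handled.

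In the iid case, every pair carries the same unknown state $\rho$ and is measured independently, so each outcome product $\hat C_i \in \{-1,+1\}$ in step 5 has mean equal to the true correlator $\langle A_t B_t\rangle_\rho$ for its assigned setting $t \in \{0,1\}$. I would apply Hoeffding's inequality separately to the two subsets $\mathbb S_0, \mathbb S_1$ of size $(K-1)/2$, take a union bound, and substitute the prescribed value of $K$ to conclude that with probability at least $1-\epsilon^x$ the empirical violation lies within $2\epsilon/q$ of the true expected violation. Conditioning on the test accepting (deviation $\epsilon$ from the maximal value $2$) then forces the true expected violation to deviate from maximal by at most $\epsilon + 2\epsilon/q$. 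Under iid this is also the violation that pair $r$ would exhibit, so Theorem~\ref{th:selftestingbounds}(a) immediately yields $F \geq 1 - \alpha(2\epsilon/q + \epsilon)$.

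The non-iid case is the main technical obstacle, since the source can entangle across pairs and act adaptively, invalidating Hoeffding on the raw empirical sum. Following \cite{Gheorghiu, Hajdusek}, the strategy is to exploit the uniform random choice of $r$ together with the uniformly random partition of the remaining pairs into $\mathbb S_0, \mathbb S_1$: permutation invariance of every statistic the protocol computes means one can pass to a de~Finetti-style symmetrized version of the adversary's state without changing any observable probability, after which a Serfling-type inequality for sampling without replacement replaces Hoeffding. This requires twice as many samples, which is the source of the doubled $K$, and leaves behind a finite-population correction that is precisely the rational term $\tfrac{4q^2 x \epsilon \log(1/\epsilon)+2\epsilon^2}{8q^2 x\log(1/\epsilon)+\epsilon^2}$ appearing in the bracket. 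The output of this step is a bound on the fidelity of pair $r$ averaged over the random choice of $r$ and over the acceptance event.

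The final step is to upgrade this average-fidelity bound to a high-probability statement about the actual pair used. I would apply a Markov-type argument: if Theorem~\ref{th:selftestingbounds}(a) certifies that the average state has fidelity at least $1-\delta$ with $\ket{\overline\psi}$, then for at least a $1-\sqrt{\delta}$ fraction of the choices of $r$ one has $F_r \geq 1-\sqrt{\delta}$. This is the origin of both the square root inside the non-iid fidelity bound and the extra $1-\sqrt{\alpha[\cdots]}$ factor in the confidence; the overall $(1-\epsilon^x)$ factor then comes from the concentration step by a union bound. The subtle points I expect to consume most of the work are (i) checking that the symmetrization argument remains valid once we condition on the public transcript, and (ii) carefully propagating the acceptance conditioning through the Markov step, both of which require handling the cross-round correlations that an adversary could otherwise introduce between the test and its outcome.
\end{proof-sketch}
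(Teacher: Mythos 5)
Your iid argument and your final ``Markov-type'' step both match the paper. The paper likewise applies Chernoff--Hoeffding separately to the $X\otimes X$ and $Z\otimes Z$ subsets with threshold $\epsilon/q$, combines them into a bound on the true violation $\delta = 2\epsilon/q+\epsilon$, and feeds this into Theorem~\ref{th:selftestingbounds}(a); and its Lemma~4 is exactly your fraction argument, with $\beta=\sqrt{\eta}-\eta$ giving $F(\rho_i,\ket{\Psi})\geq 1-\sqrt{\eta}$ with probability $\geq 1-\sqrt{\eta}$.

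The non-iid concentration step is where you diverge, and where there is a genuine gap. The paper does not symmetrize and apply a Serfling-type bound; it follows Pironio et al.\ and Hajdu\v{s}ek et al.: the partial sums $W_j=\sum_{i\le j}(\hat C_i - C_i)$, where $C_i$ is the conditional expectation given the history $H_i$, form a martingale with increments bounded by $2$, and Azuma--Hoeffding is applied directly. This matters for two reasons. First, correctness: Serfling's inequality is a statement about sampling without replacement from a \emph{fixed} finite population, whereas in the fully adversarial setting Bob's device has memory and can choose its response in round $i$ as a function of the transcript so far, so the ``population values'' are not fixed before the random choices are made. Your own caveat (i) is precisely this issue, and it is not a technicality to be checked at the end --- it is the reason the martingale formulation is used; conditioning on the transcript is built into the definition of $C_i$ rather than something to be recovered after symmetrization. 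Second, the constants: the factor-of-4 increase in $K$ (from $4q^2x/\epsilon^2\log(1/\epsilon)$ to $16q^2x/\epsilon^2\log(1/\epsilon)$, not a doubling) comes from the Azuma increment bound $|W_{j+1}-W_j|\le 2$ replacing the Hoeffding range, and the rational term $\frac{4q^2x\epsilon\log(1/\epsilon)+2\epsilon^2}{8q^2x\log(1/\epsilon)+\epsilon^2}$ is not a finite-population correction --- it is the contribution of the single untested pair $r$, whose hypothetical deviation is taken to be the worst case $\epsilon'_{XX}=2$ and then diluted by $1/K_{XX}$ in the average. A Serfling-based derivation, even if repaired to handle adaptivity, would not reproduce these expressions, so your sketch as written does not arrive at the stated theorem.
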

\begin{proof-sketch}\renewcommand{\qedsymbol}{}
In Protocol \ref{alg:authtelep}, the final Bell pair is accepted in step 7 if, for all previous tested pairs, the measured average correlation was $\epsilon$-close to the correlation for an ideal pair. Our aim is to use this information to bound the fidelity of the final untested state to be used for teleportation. 

First, following the method of \cite{Gheorghiu}, we determine the closeness between the true correlation (i.e. the expectation value) and the ideal correlation, using this measured correlation. In the iid case, we do this by taking the measurement outcomes to be independent random variables, and then using the Chernoff-Hoeffding bound \cite{Chernoff1952, Hoeffding1963}. In the non-iid case, we instead use the martingale approach of Pironio \text{et al.} \cite{Pironio2010} and apply the Azuma-Hoeffding inequality \cite{Azuma1967, Hoeffding1963}. 

In our analysis, we also take into account the untested state $r$, which is straightforward if we assume an iid source. For the non-iid case, we do this by considering the maximum hypothetical deviation from the ideal correlation as in \cite{Hajdusek}. 

At this stage, we have an expression for the amount of violation of our steering inequality from the measured average correlations. Then, we apply our result from Theorem \ref{th:selftestingbounds} to bound the fidelity of the average state.  Our final result is a bound on any state prepared by the source, including state $r$ which is used for teleportation.

We introduce the parameters $q, x$ in order to tailor our protocol to possible experimental implementations, depending on the relative ease of generating many states or observing a high inequality violation. 
The full proof is given in Appendix B. 
\end{proof-sketch}

The bound on the fidelity of teleportation directly follows from this, as given in Corollary \ref{co:telepfid}.
\begin{corollary}
The fidelity of the teleported state (up to local isometry) in Protocol \ref{alg:authtelep} is lower-bounded by Theorem \ref{th:fid}. 
\label{co:telepfid}
\end{corollary}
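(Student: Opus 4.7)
The plan is to leverage the standard fact that quantum teleportation, as a linear protocol built from Bell-basis measurement, classical communication and local Pauli corrections, faithfully transmits fidelity from the resource state to the output. Concretely, by Theorem \ref{th:fid}, the entangled pair $r$ used in step 7 of Protocol \ref{alg:authtelep} satisfies, after applying the local isometry $\Phi$ on Bob's side, $F\bigl(\mathrm{tr}_B[\Phi(\ket{\psi_r}\!\bra{\psi_r})],\ket{\overline{\psi}}\bigr)\geq 1-\delta$, where $\delta$ is the explicit expression given in the iid or non-iid statement of Theorem \ref{th:fid}. My goal is to propagate this bound, unchanged up to the same isometry, through the teleportation step to the state received by Bob.

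First, I would note that since Bob's laboratory contains the measurement devices that he does not trust, the isometry $\Phi$ is exactly the logical map that relates his untrusted Hilbert space to an ideal qubit register. In the ideal protocol, Alice performs a Bell measurement on the input qubit $\ket{\varphi}$ together with her half of the pair, sends the two classical bits to Bob, and Bob applies the associated Pauli correction. Composing this with $\Phi$ on Bob's side yields, in the case of a perfect Bell pair, exactly the identity channel from Alice's input register to Bob's output qubit. I would then invoke the monotonicity of fidelity under quantum channels: any CPTP map applied jointly to $\ket{\varphi}\!\bra{\varphi}\otimes\mathrm{tr}_B[\Phi(\ket{\psi_r}\!\bra{\psi_r})]$ and to $\ket{\varphi}\!\bra{\varphi}\otimes\ket{\overline{\psi}}\!\bra{\overline{\psi}}$ can only increase fidelity, and the channel in question (Alice's Bell measurement, classical communication, and Bob's Pauli correction together with $\Phi$) takes the latter to $\ket{\varphi}\!\bra{\varphi}$.

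Consequently the teleported state $\rho_{\mathrm{out}}$ satisfies $F(\rho_{\mathrm{out}},\ket{\varphi})\geq F\bigl(\mathrm{tr}_B[\Phi(\ket{\psi_r}\!\bra{\psi_r})],\ket{\overline{\psi}}\bigr)\geq 1-\delta$, which is precisely the bound of Theorem \ref{th:fid}, and the probabilistic prefactor carries over unchanged because step 7 is conditioned on the same acceptance event. This establishes the corollary.

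The main obstacle, and what the proof should address carefully, is bookkeeping of the isometry: one has to be explicit that Bob's untrusted Pauli correction combined with $\Phi$ is the same isometry that appears in Theorem \ref{th:fid}, so that the ``up to local isometry'' qualifier in the statement of the corollary matches. Once this identification is made, no new analytic estimates are needed and the corollary follows directly from Theorem \ref{th:fid} together with monotonicity of fidelity under the trusted classical post-processing and Alice's trusted Bell measurement.
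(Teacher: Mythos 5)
Your proof is correct and lands on the same one-line reduction the paper uses: the fidelity of the teleported state is at least the fidelity of the resource state, so the bound of Theorem \ref{th:fid} carries over verbatim (probabilistic prefactor included). The difference is purely in how that key fact is justified. The paper simply cites Horodecki, Horodecki and Horodecki \cite{Horodecki}, whose result on general teleportation channels gives the average teleportation fidelity $f=(Fd+1)/(d+1)$ in terms of the singlet fraction $F$, whence $f\geq F$ for all $F\leq 1$. You instead derive the inequality from first principles: multiplicativity of fidelity over tensor products gives $F\bigl(\ket{\varphi}\!\bra{\varphi}\otimes\tau,\ket{\varphi}\!\bra{\varphi}\otimes\ket{\overline{\psi}}\!\bra{\overline{\psi}}\bigr)=F(\tau,\ket{\overline{\psi}})$, and monotonicity of fidelity under the CPTP teleportation map (which sends the ideal input to $\ket{\varphi}\!\bra{\varphi}$ exactly) yields $F(\rho_{\mathrm{out}},\ket{\varphi})\geq F(\tau,\ket{\overline{\psi}})$. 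Your route is self-contained and gives a per-input (worst-case) statement rather than the averaged one, which is if anything slightly stronger; the paper's route is shorter and offloads the work to a standard reference. You correctly flag the isometry bookkeeping (Bob's untrusted correction versus the register $B'$ extracted by $\Phi$) as the delicate point; note that the paper does not resolve this either, but rather absorbs it into the ``up to local isometry'' qualifier in the statement of the corollary, so your treatment is at the same level of rigor as the original.
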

\begin{proof}\renewcommand{\qedsymbol}{}
It is known that the fidelity of teleportation is at least as high as the fidelity of the entangled state used for teleportation \cite{Horodecki}, thus our bound on the entangled state also holds for the teleported state. 
\end{proof}

There is some flexibility in the protocol which would not affect our results. For example, we have written it such that Alice chooses the pair $r$, and essentially runs the protocol, but this can be easily changed to Bob, or a third party. Furthermore, as in \cite{Barnuma, Marin}, it is possible to make the protocol less interactive by replacing the communication between Alice and Bob with shared random strings, indicating which copy to use for the teleportation, and which measurement setting to test each copy with. 

In the way that we propose our protocol above, a quantum memory is needed to store the copies until they are tested or used. To override this experimentally difficult requirement, the source can generate Bell pairs on-the-fly, and the parties either test or use each pair depending on the chosen $r$. In this case, since we may not have checked for our desired inequality violation before the teleportation step (for example, if $r=1$ we immediately use the state before any testing), we run the risk of teleporting through a bad quantum channel. However, once we know our inequality violation at the end of the protocol, we can compute a bound on the fidelity of the teleportation from Theorem \ref{th:fid}. Alternatively, we could modify our protocol in the style of Pappa et al. \cite{Pappa2012}. Here, one uses the state with probability $2^{-K}$ in each of the $K$ rounds and tests otherwise, and the parties abort if any test fails. This would require an adaptation of the above proof. 

Finally, we note that the CHSH inequality can also be used to witness steering. The analysis of Protocol \ref{alg:authtelep} in this case, including the number of pairs required, follows from the proof of Theorem \ref{th:fid} by applying the relevant self-testing result, and is given in Appendix C.

\bigskip

\textit{Discussion and experimental feasibility.---}
First, we compare the security and resources required in our protocol alongside that of \cite{Barnuma}. While the protocol in \cite{Barnuma} has an exponential scaling with the security parameter, the size of the encoding, effectively the entanglement, increases with the desired security level. On the other hand, our protocol only uses multiple copies of the Bell pair.
As we outline below, due to the ease of generating entangled pairs (compared to high entanglement), our protocol is practically feasible. Furthermore, it is secure even in the case where devices are not trusted, not dealt with in \cite{Barnuma}.

Next, we discuss our self-testing results for the Pauli measurements, given as a fidelity measure. Using a steering inequality violation of at least 1.94 in the 1sDI case, or a CHSH violation of 2.78 in the DI case, one can certify a fidelity greater than 80\%. As such high Bell violations have been observed \cite{Kaiser2014, Abellan2018}, our bounds are sufficiently robust to be experimentally useful. 

To facilitate comparison with previous works, our bounds in Theorem \ref{th:selftestingbounds} are given in the same form as the literature in Table III (Appendix A). We see that we have improved upon all existing full self-testing bounds for the state and measurements in both the 1sDI and DI settings. Kaniewski's analytical approach \cite{Kaniewski2016} used a different isometry than SWAP and resulted in a better bound for the state in the DI setting. Any improved self-testing bound can directly be used to bound the fidelity of teleportation in our protocol using Corollary \ref{co:telepfid} and the parameter $\alpha$. 

We will now assess the experimental feasibility of Protocol \ref{alg:authtelep}.  
In order to improve upon teleportation by classical communication, we require the fidelity of teleportation to be greater than $\frac{2}{3}$ \cite{Popescu1994}. 
For example, 1sDI authenticated teleportation could be demonstrated, under the iid (non-iid) assumption, by observing a steering inequality violation of 1.75 (1.92) using $10^5$ ($10^8$) copies. 
If, instead, the CHSH inequality was used as the test, one would require a violation in the iid (non-iid) setting of 2.49 (2.73) with $10^6$ ($10^8$) copies.  (See Figure 2 in Appendix C for a plot of the full results). 
Such results could be demonstrated using Werner states $\rho(v) = v \ket{\overline{\psi}}\bra{\overline{\psi}} + (1-v) \frac{\mathds{1}}{4}$ with visibility $v \geq 0.88$ in the iid and $v \geq 0.96$ in the non-iid settings, which is well within experimental limitations. However, any implementation will also be subject to other imperfections and losses, such as in the photon detection efficiency. Using previous experimentally demonstrated violations of the steering inequality with two measurement settings \cite{Saunders2010, Orieux2017}, in the iid scenario one could certify a teleportation fidelity of at least 60\%. The relative abundance of experimental work on the CHSH inequality shows that the number of copies and violation (even in the non-iid setting) required for our scheme could be demonstrated in existing labs, for example \cite{Altepeter2005, Christensen2013, Abellan2018}.
Additionally, the number of copies required by our finite analysis is sufficiently high so as to provide a good enough estimate of the quantum distribution, meaning that we do not need to employ regularisation methods as in \cite{Lin2018}. 
Our scheme is thus the first authenticated teleportation protocol that is practical in its robustness, implementable with existing experimental setups, and further, tolerates untrusted devices.

We now comment on possible extensions of this work. Note that our protocol is a one-sided device-independent method of authenticating the quantum channel in teleportation. One can use our results to do a fully DI test of the Bell pair (the required parameters are in Appendix C), but in order to build a fully device-independent teleportation scheme, we must also consider self-testing of Alice's Bell state measurement. Recently, Renou et al. \cite{Renou2018} and Bancal et al. \cite{Bancal2018} have focused on this particular problem. While \cite{Bancal2018} gives practically robust bounds, they still assume iid and infinite runs, as is common in self-testing. Thus, we cannot directly apply these results to our protocol, and it remains as further work. 

We note that we have not considered the detection loophole in this work, and so have implicitly assumed sufficiently high detection efficiencies (as discussed in \cite{Pironio2009}, in a cryptographic setting we need not be concerned with other loopholes). An extension of our work that closes the detection loophole could take the no-detection events into account in the steering inequality, as was experimentally demonstrated in \cite{Wittmann2012, Smith2012, Bennet2012, Wollmann2016}. Although these reported violations are not yet high enough to demonstrate our scheme in a loophole-free manner, we expect that the development of 1sDI protocols will stimulate experimental efforts to achieve this in the near future. In fact, investigating other steering inequalities, possibly with more measurement settings, could prove useful in further optimising our protocol for experimental implementation. However, for the CHSH inequality, it is already possible to achieve high violations  while closing the detection loophole \cite{Christensen2013}. 

Finally, we emphasise that teleportation is not the only application of our work. Our results in Theorem \ref{th:fid} for the 1sDI setting, and in Appendix C for the DI setting, give a bound on the fidelity of the final untested state, which can then be used for a variety of other applications such as verified quantum computation. 

\bigskip

\noindent \textbf{Acknowledgements.} We thank Matty Hoban, Alexandru Gheorghiu, Eleni Diamanti, Simon Neves, Adeline Orieux and Valerio Scarani for fruitful discussions.
We acknowledge support of the EPSRC and the ANR through the ANR-17-CE24-0035 VanQute project.


\section{Appendix A: Proof of Theorem 1}
\label{sec: Appendix A}
\begin{manualtheorem}{1}
\begin{itemize}
\item[(a)] In the 1sDI setting, if the parties observe a $2 - \epsilon$ violation of the steering inequality, then $
f(\epsilon)  = 1.26 \epsilon, 
f'(\epsilon)  = 3.10 \epsilon$.
\item[(b)] In the DI setting, if the parties observe a $2\sqrt{2} - \epsilon$ violation of the CHSH inequality, then $f(\epsilon) = 1.19\epsilon, f'(\epsilon) = 3.70\epsilon$.
\end{itemize}
\label{th:selftestingbounds}
\end{manualtheorem}
\begin{proof}
Our proof is given in the following subsections.
$ $ \newline
\subsubsection{1. Fidelity expressions for 1sDI}
We first look at how to get an expression for the fidelity of the self-testing where we follow and extend the method of \cite{Supic}, using the SWAP isometry, which will later be fed into the SDP optimisation. Let Bob's untrusted device be a black box with measurement settings $\mathsf{y} \in \{ 0, 1 \}$ and outcomes $\mathsf{b} \in \{ 0, 1 \}$. The untrusted measurements made by Bob are then written in terms of the projectors $E_{\mathsf{b}|\mathsf{y}}$ as $X_B = 2E_{0|1} - \mathds{1}, Z_B = 2E_{0|0} - \mathds{1}$. 

The SWAP isometry $\Phi = \mathds{1}_A \otimes \Phi_B$ (Figure \ref{fig:isometry}), the standard in self-testing literature, performs the SWAP operation if the untrusted devices are operating correctly. It is composed of adding a trusted ancilla in a known state ($\ket{+}$) to Bob's subsystem, and then a unitary transformation which swaps part of the untrusted state onto the ancilla, resulting in a two-qubit state. 

The unitary transformation is written in such a way that it incorporates Bob's untrusted operators that we wish to test. Denoting Bob's system as $B$ and his ancilla system as $B'$, this transformation is given by $VHU$, where
\begin{align}
V & = \ket{0}\bra{0}_{B'} \otimes \mathds{1}_B  + \ket{1}\bra{1}_{B'} \otimes X_B, \\
H & = \ket{+}\bra{0}_{B'} + \ket{-}\bra{1}_{B'}, \\
U & = \ket{0}\bra{0}_{B'}  \otimes \mathds{1}_B + \ket{1}\bra{1}_{B'} \otimes Z_B.
\end{align}
The isometry then acts on the untrusted state as
\begin{align}
\Phi(\ket{\psi}) =  \mathds{1}_A \otimes (VHU)_{B' B} \ket{\psi}_{AB} \ket{+}_{B'}.
\end{align}
To self-test the state, we will determine a bound on the fidelity $F(\text{tr}_B [\Phi(\ket{\psi}\bra{\psi})], \ket{\overline{\psi}}) = \bra{\overline{\psi}} \text{tr}_B \big[ \Phi(\ket{\psi}\bra{\psi}) \big] \ket{\overline{\psi}}$.

In addition, we wish to self-test the untrusted measurement operators $N_B$.
To study this, we look at the action of the isometry given by
\begin{align}
\Phi(\mathds{1}_A \otimes N_B \ket{\psi}) = 
  \mathds{1}_A \otimes (VHU)_{B' B} \mathds{1}_A \otimes N_B \ket{\psi}_{AB} \ket{+}_{B'}.
\end{align}
In the ideal case, we have $
\mathds{1}_A \otimes \overline{N}_B \ket{\overline{\psi}}$. We will determine the closeness between these two expressions, and denoting $\ket{\psi'} = \overline{M}_A \otimes N_B \ket{\psi}$, we will compute a bound on the fidelity $F(\text{tr}_B [ \Phi(\ket{\psi'}\bra{\psi'}) ], \overline{M}_A \otimes \overline{N}_B \ket{\overline{\psi}})$. 

Our fidelity expressions will be in terms of assemblages on Alice's side, given by $\tau_{\mathsf{b}|\mathsf{y}} = \text{tr}_B (\mathds{1}_A \otimes E_{\mathsf{b}|\mathsf{y}} \ket{\psi} \bra{\psi})$. The full forms of these expressions are given in Table \ref{table:1sdifid}.

\begin{figure}[t]
\centering
\includegraphics[trim = 0mm 7cm 0mm 1cm, width=0.5\textwidth]{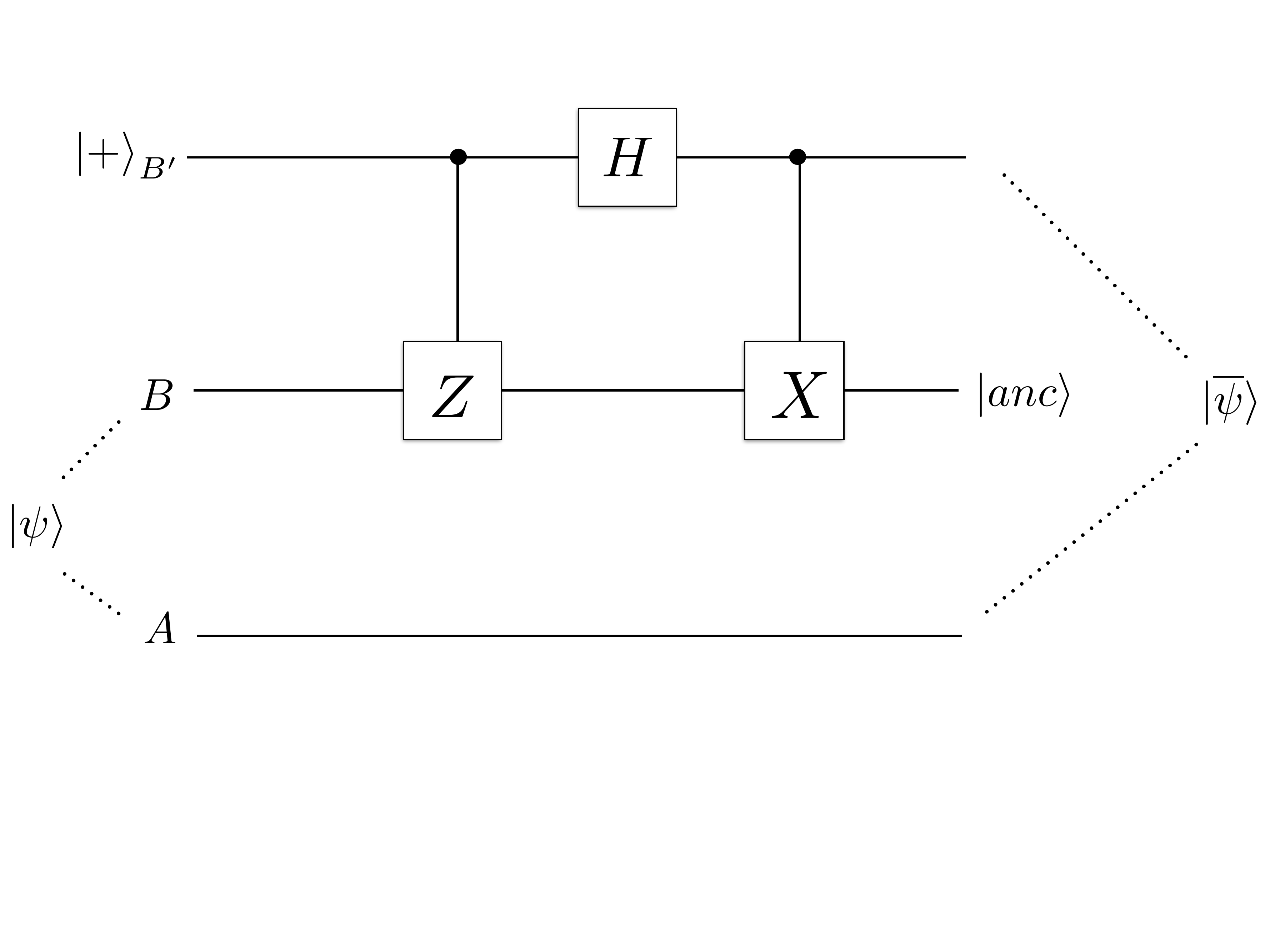}
\caption{The SWAP isometry applied to Bob's system for self-testing the state $\ket{\psi}$. \label{fig:isometry}}
\end{figure}

\begin{table*}[ht]
\centering
\begin{ruledtabular}
\begin{tabular}{   p{1.5cm}  p{14.5cm}  }
\multicolumn{1}{c}{\textit{To test}} & \multicolumn{1}{c}{\textit{Fidelity expression}} \\ \hline

 \begin{align*} \text{State}  \end{align*} & \begin{equation*} \begin{aligned}
 \frac{1}{2} \Big[ & \bra{0_A}  \tau_{0|0} \ket{0_A}  + \bra{0_A} (  2 \tau_{0|1, 0|0} - 2 \tau_{0|0, 0|1, 0|0} ) \ket{1_A}  \\
 & + \bra{1_A} (  2 \tau_{0|0, 0|1} - 2 \tau_{0|0, 0|1, 0|0}) \ket{0_A}    + \bra{1_A} (\rho_A - \tau_{0|0})  \ket{1_A} \Big] 
\end{aligned} \end{equation*} \\ 

 \begin{align*} Z_B \end{align*} & \begin{equation*} \begin{aligned}
 \frac{1}{2} \Big[ & \bra{0_A}  \tau_{0|0} \ket{0_A} + \bra{0_A} (  2 \tau_{0|1, 0|0} - 2 \tau_{0|0, 0|1, 0|0} ) \ket{1_A}  \\
 & + \bra{1_A} (  2 \tau_{0|0, 0|1} - 2 \tau_{0|0, 0|1, 0|0}) \ket{0_A}   + \bra{1_A} (\rho_A - \tau_{0|0})  \ket{1_A} \Big] 
\end{aligned} \end{equation*} \\ 

 \begin{align*} X_B \end{align*} & \begin{equation*} \begin{aligned} \frac{1}{2} \Big[ & \bra{0_A} ( \rho_A - \tau_{0|0} - 4 \tau_{0|1, 0|0, 0|1} + 2 \tau_{0|0, 0|1} + 2 \tau_{0|1, 0|0} ) \ket{0_A} \\
& +  \bra{0_A} (- 2 \tau_{0|0, 0|1} + 4 \tau_{0|0, 0|1, 0|0, 0|1} 
 + 4 \tau_{0|1, 0|0, 0|1, 0|0} + 4 \tau_{0|1, 0|0, 0|1}   - 2 \tau_{0|0, 0|1, 0|0} \\ 
&  - 8 \tau_{0|1, 0|0, 0|1, 0|0, 0|1}) \ket{1_A} 
 +  \bra{1_A} ( - 2 \tau_{0|1, 0|0}  + 4 \tau_{0|0, 0|1, 0|0, 0|1} + 4 \tau_{0|1, 0|0, 0|1, 0|0} + 4 \tau_{0|1, 0|0, 0|1}  \\
 & - 2 \tau_{0|0, 0|1, 0|0} - 8 \tau_{0|1, 0|0, 0|1, 0|0, 0|1})\ket{0_A}    +  \bra{1_A} ( 4 \tau_{0|1, 0|0, 0|1} - 2 \tau_{0|0, 0|1} - 2 \tau_{0|1, 0|0} + \tau_{0|0} ) \ket{1_A}
\Big] \end{aligned} \end{equation*} \\ 

\end{tabular}
\end{ruledtabular}
\caption{Fidelity expressions for the 1sDI setting. \label{table:1sdifid}}
\end{table*}

\subsubsection{2. Fidelity expressions for DI}

We use a similar framework for self-testing in the DI setting, following \cite{Bancal2015}. Since both Alice and Bob's devices are now untrusted, we apply the SWAP isometry to both parties' systems to extract a Bell pair. For ease of calculations, we now take the ideal state to be
\begin{align}
\ket{\overline{\psi}} = \frac{1}{\sqrt{2}} \begin{bmatrix} \cos{(\frac{\pi}{8})} \\[0.3em] \sin{(\frac{\pi}{8})} \\[0.3em] \sin{(\frac{\pi}{8})} \\[0.3em] -\cos{(\frac{\pi}{8})} \end{bmatrix},
\end{align}
which is equivalent to the Bell state up to local unitaries. This state maximally violates the CHSH inequality if Alice and Bob measure the $\sigma_Z, \sigma_X$ operators, which simplifies our analysis.
The isometry $\Phi = \Phi_A \otimes \Phi_B$ performs the same SWAP operation here, but we write it slightly differently to follow \cite{Bancal2015}. The ancilla on both parties' systems $A', B'$ is initialised in state $\ket{0}$, and the unitary transformation is now written as $VRV$, where $V$ is defined as before and $R$ is given by
\begin{align}
R = \mathds{1}_{A'} \otimes \Big( \frac{\mathds{1}_A + Z_A}{2} \Big) + \sigma_{X_{A'}} \otimes \Big( \frac{\mathds{1}_A - Z_A}{2} \Big)
\end{align}
on Alice's side, for example. For self-testing the state, we then have
\begin{align}
\Phi(\ket{\psi}) =  (VRV)_{A' A} \otimes (VRV)_{B' B} \ket{\psi}_{AB}  \otimes \ket{00}_{A' B'}.
\end{align}
The fidelity is given by $F(\text{tr}_{AB} [\Phi(\ket{\psi}\bra{\psi})], \ket{\overline{\psi}})  = \bra{\overline{\psi}} \text{tr}_{AB} \big[ \Phi(\ket{\psi} \bra{\psi} ) \big] \ket{\overline{\psi}}$.

For self-testing the measurements, we now denote Alice's untrusted measurement operator as $M_A$ and $\ket{\psi'} = M_A \otimes N_B \ket{\psi}$. After applying the isometry, we have
\begin{align}
\Phi( & M_A \otimes N_B \ket{\psi})  = \nonumber \\
&  (VRV)_{A' A} \otimes (VRV)_{B' B} M_A \otimes N_B \ket{\psi}_{AB} \ket{00}_{A'B'}.
\end{align}
The ideal operators acting on the ideal state gives $
\ket{\overline{\psi'}} = \overline{M}_A \otimes \overline{N}_B \ket{\overline{\psi}}$. 
The fidelity is then given by $
F(\text{tr}_{AB} [ \Phi(\ket{\psi'}\bra{\psi'})], \overline{M}_A \otimes \overline{N}_B \ket{\overline{\psi}})  = \bra{\overline{\psi'}} \text{tr}_{AB} [ \Phi(\ket{\psi'} \bra{\psi'}) ] \ket{\overline{\psi'}}$.

The full forms of these fidelity expressions are given in Table \ref{table:difid}.

\begin{table*}[ht]
\centering
\begin{ruledtabular}
\begin{tabular}{   p{1.5cm}  p{14.5cm}  }
\multicolumn{1}{c}{\textit{To test}} & \multicolumn{1}{c}{\textit{Fidelity expression}} \\ \hline

 \begin{align*} \text{State} \end{align*} & \begin{equation*} \begin{aligned} \frac{1}{2} \Big[ \frac{1}{2} & + \frac{1}{2\sqrt{2}} \langle Z_A Z_B \rangle + \frac{1}{4\sqrt{2}} \langle Z_A X_B \rangle 
+ \frac{1}{4\sqrt{2}} \langle X_A Z_B \rangle  - \frac{1}{8\sqrt{2}} \langle X_A X_B \rangle  - \frac{1}{8} \langle Z_A X_A Z_B X_B \rangle \\
&  - \frac{1}{8} \langle X_A Z_A X_B Z_B \rangle  + \frac{1}{8} \langle X_A Z_A Z_B X_B \rangle + \frac{1}{8} \langle Z_A X_A X_B Z_B \rangle + \frac{1}{8\sqrt{2}} \langle Z_A X_A Z_A X_B \rangle \\
& + \frac{1}{8\sqrt{2}} \langle X_A Z_B X_B Z_B \rangle  - \frac{1}{4\sqrt{2}} \langle Z_A X_A Z_A Z_B \rangle  - \frac{1}{4\sqrt{2}} \langle Z_A Z_B X_B Z_B \rangle  - \frac{1}{8\sqrt{2}} \langle Z_A X_A Z_A Z_B X_B Z_B \rangle \Big] \end{aligned} \end{equation*} \\ 

 \begin{align*} Z_A \otimes Z_B \end{align*} & \begin{equation*} \begin{aligned} \frac{1}{2} \Big[ \frac{1}{2} & + \frac{1}{2\sqrt{2}} \langle Z_A Z_B \rangle + \frac{1}{4\sqrt{2}} \langle Z_A X_B \rangle 
+ \frac{1}{4\sqrt{2}} \langle X_A Z_B \rangle  - \frac{1}{8\sqrt{2}} \langle X_A X_B \rangle  - \frac{1}{8} \langle Z_A X_A Z_B X_B \rangle \\
&  - \frac{1}{8} \langle X_A Z_A X_B Z_B \rangle   + \frac{1}{8} \langle X_A Z_A Z_B X_B \rangle + \frac{1}{8} \langle Z_A X_A X_B Z_B \rangle + \frac{1}{8\sqrt{2}} \langle Z_A X_A Z_A X_B \rangle \\
& + \frac{1}{8\sqrt{2}} \langle X_A Z_B X_B Z_B \rangle  - \frac{1}{4\sqrt{2}} \langle Z_A X_A Z_A Z_B \rangle - \frac{1}{4\sqrt{2}} \langle Z_A Z_B X_B Z_B \rangle  - \frac{1}{8\sqrt{2}} \langle Z_A X_A Z_A Z_B X_B Z_B \rangle \Big] \end{aligned} \end{equation*} \\ 

 \begin{align*} X_A \otimes X_B \end{align*} & \begin{equation*} \begin{aligned} \frac{1}{2} \Big[ \frac{1}{2} & - \frac{1}{8\sqrt{2}} \langle X_A X_B \rangle - \frac{1}{4\sqrt{2}} \langle X_A Z_A X_A X_B \rangle  - \frac{1}{4\sqrt{2}} \langle X_A X_B Z_B X_B \rangle  - \frac{1}{8} \langle Z_A X_A Z_B X_B \rangle \\
&  - \frac{1}{8} \langle X_A Z_A X_B Z_B \rangle  + \frac{1}{8} \langle X_A Z_A Z_B X_B \rangle + \frac{1}{8} \langle Z_A X_A X_B Z_B \rangle  + \frac{1}{2\sqrt{2}} \langle X_A Z_A X_A X_B Z_B X_B \rangle \\
& + \frac{1}{8\sqrt{2}} \langle X_A Z_A X_A Z_A X_A X_B \rangle  + \frac{1}{8\sqrt{2}} \langle X_A X_B Z_B X_B Z_B X_B \rangle  + \frac{1}{4\sqrt{2}} \langle X_A Z_A X_A Z_A X_A X_B Z_B X_B \rangle \\
& + \frac{1}{4\sqrt{2}} \langle X_A Z_A X_A X_B Z_B X_B Z_B X_B \rangle  - \frac{1}{8\sqrt{2}} \langle X_A Z_A X_A Z_A X_A X_B Z_B X_B Z_B X_B \rangle \Big] \end{aligned} \end{equation*}  \\ 

 \begin{align*} Z_A \otimes X_B \end{align*} & \begin{equation*} \begin{aligned} \frac{1}{2} \Big[ \frac{1}{2} & + \frac{1}{4\sqrt{2}} \langle Z_A X_B \rangle - \frac{1}{8\sqrt{2}} \langle X_A X_B \rangle + \frac{1}{8\sqrt{2}} \langle Z_A X_A Z_A X_B \rangle  - \frac{1}{2\sqrt{2}} \langle Z_A X_B Z_B X_B \rangle \\
& - \frac{1}{4\sqrt{2}} \langle X_A X_B Z_B X_B \rangle   - \frac{1}{8} \langle Z_A X_A Z_B X_B \rangle  - \frac{1}{8} \langle X_A Z_A X_B Z_B \rangle + \frac{1}{8} \langle X_A Z_A Z_B X_B \rangle \\
& + \frac{1}{8} \langle Z_A X_A X_B Z_B \rangle    + \frac{1}{4\sqrt{2}} \langle Z_A X_A Z_A X_B Z_B X_B \rangle  - \frac{1}{4\sqrt{2}} \langle Z_A X_B Z_B X_B Z_B X_B \rangle  \\
& + \frac{1}{8\sqrt{2}} \langle X_A X_B Z_B X_B Z_B X_B \rangle  - \frac{1}{8\sqrt{2}} \langle Z_A X_A Z_A X_B Z_B X_B Z_B X_B \rangle \Big]  \end{aligned} \end{equation*} \\ 

\end{tabular}
\end{ruledtabular}
\caption{Fidelity expressions for the DI setting. \label{table:difid}}
\end{table*}

\subsubsection{3. Constraints}

Our aim is to then determine a bound on the fidelity measures in Tables \ref{table:1sdifid} - \ref{table:difid}, given an $\epsilon$-near-maximal violation of the steering or CHSH inequalities. 

In the 1sDI setting, these inequalities can be written in the form of Alice's trusted operators $\sigma_X, \sigma_Z$ acting on assemblages that result from Bob measuring his untrusted operators $X, Z$. We write our inequalities in terms of assemblages on Alice's side as
\begin{align}
\text{ tr} [ \sigma_Z (2 \tau_{0|0} - \rho_A)  +  \sigma_X (2\tau_{0|1} - \rho_A) ] & = 2 - \epsilon, \\
\text{ tr} [\sqrt{2} \sigma_Z (2 \tau_{0|0} - \rho_A)   + \sqrt{2} \sigma_X (2\tau_{0|1} - \rho_A) ] & = 2\sqrt{2} - \epsilon, 
\end{align}
for the steering and CHSH inequalities, respectively.

In the DI case, where both parties' devices are untrusted, they must demonstrate nonlocality by violating the CHSH inequality.

\subsubsection{4. Finding a bound via SDP}

We then use an SDP to find the minimum value of our fidelity expression consistent with the amount of violation of the inequality. It is formulated as follows:
\begin{align}
 \text{minimise } & \text{tr} (P \Gamma)  \nonumber \\
 \text{such that } &  \Gamma \geq 0 \nonumber \\
 & \text{tr} (Q \Gamma) = w.
\end{align} 
Here, $\Gamma$ is the moment matrix from the NPA hierarchy for characterising quantum correlations \cite{Navascues2007, Navascues2008}. In the 1sDI setting, the rows are given by a set of operators that are some product of Bob's projectors $E_{\mathsf{b}|\mathsf{y}}$ written as $\{ \mathcal{E}_1, ... , \mathcal{E}_m \} $, the columns by the adjoints, and each element is given by $\Gamma_{k l} = \text{tr}_B (\mathcal{E}_l^\dag \mathcal{E}_k \ket{\psi} \bra{\psi})$. Thus, $\Gamma$ is an $m \times m$ matrix composed of assemblage elements. In the DI setting, the rows are given by a set of operators that are some combination of both Alice and Bob's operators $A_k, B_k$, which results in the elements of $\Gamma$ being expectation values. It has been proven in \cite{Supic, Bancal2015} that $\Gamma$ is positive semidefinite. 

Our constraint is then given by the amount of violation $w$ of a suitable inequality. Formulating our problem in terms of an SDP essentially amounts to finding a suitable form of $\Gamma$ that contains all the elements in our fidelity and inequality expressions, writing these expressions in terms of symmetric matrices $P, Q$ acting on $\Gamma$, and constraining the structure of $\Gamma$. 

For self-testing the measurements, this final step is the most time-consuming. For example, in the case of self-testing the $X \otimes X$ measurement in the DI scenario, to solve the SDP we require a $\Gamma$ of size $81 \times 81$, which has over $22,000$ constraints. To work around this problem, we have automated this process of generating $\Gamma$ and its constraints, using a combination of C++ and Python scripts. Our code takes as input the rows and columns of $\Gamma$. It then generates all the entries of $\Gamma$, determines which entries are equal to one another, and after some processing, outputs a list of all the unique constraints, in a format that can be directly entered into our SDP.

Once our SDP is set up, we solve it via CVX \cite{CVXResearchInc.2012, Grant2008}, giving the bounds in Theorem \ref{th:selftestingbounds}. An alternative form of our bounds is given in Table \ref{table:ourbounds}, for comparison with the literature.

\begin{table*}[t]
\centering
\begin{ruledtabular}
\begin{tabular}{llllll}
& \textit{Method} & \textit{Trust} & \textit{Inequality} & \textit{State bound} & \textit{State and measurements bound} \\   \hline 
Reichardt et al. \cite{Reichardt2013} & analytical & DI & CHSH & $320 \sqrt{\epsilon} $ &  \\ \\
McKague et al.  \cite{Mckaguea} & analytical & DI  & CHSH & $10.9\epsilon^{1/4} + 3.6 \sqrt{\epsilon} $ & $10.9\epsilon^{1/4} + 13.1 \sqrt{\epsilon} $ \\ \\
Kaniewski \cite{Kaniewski2016} & analytical & DI & CHSH & $1.18 \sqrt{\epsilon}$ & \\ \\
Bancal et al. \cite{Bancal2015} & numerical & DI  & CHSH & $1.48 \sqrt{\epsilon}$ &  \\ \\
Gheorghiu et al. \cite{Gheorghiu} & analytical & 1sDI  & steering & $2.8 \sqrt{\epsilon} + 0.5 \epsilon$ & $10.8 \sqrt{\epsilon} + 0.5 \epsilon$ \\ \\
\v Supi\'c et al. \cite{Supic} & numerical & 1sDI  & CHSH & $1.34 \sqrt{\epsilon}$ &  \\ \\
\multirow{3}{*}{\textit{This work}} & \multirow{3}{*}{numerical} & 1sDI  & steering & $1.59 \sqrt{\epsilon}$ & $2.49 \sqrt{\epsilon} $  \\ 
  & & 1sDI  & CHSH & $1.34 \sqrt{\epsilon} $ & $2.10 \sqrt{\epsilon}$ \\ 
 & & DI  & CHSH & $1.54\sqrt{\epsilon} $ & $2.72 \sqrt{\epsilon}$ 
\end{tabular}
\end{ruledtabular}
\caption{New and existing bounds on trace distance, given an $\epsilon$-near-maximal violation of the inequality. The state bound gives an upper bound on 
$\norm{ \Phi( \ket{\psi}) -   \ket{\overline{\psi}} \otimes \ket{anc} } \leq \sqrt{2f(\epsilon)} $, 
while the state and measurements bound gives an upper bound on 
$\norm{ \Phi(M_A \otimes N_B \ket{\psi}) - ( \overline{M}_A \otimes \overline{N}_B \ket{\overline{\psi}}) \otimes \ket{anc} }  \leq \sqrt{2f'(\epsilon)}$, 
where the relation between the bounds is calculated from Equation (39) of \cite{Bancal2015}.
In the 1sDI case, $M_A = \overline{M}_A$, since we trust Alice's measurement devices.
\label{table:ourbounds}}
\end{table*}
\end{proof}

\section{Appendix B: Proof of Theorem 2}
\label{sec:proofs}
\begin{manualtheorem}{2}
The fidelity of the entangled state used for teleportation in step 7 of Protocol 1 (up to local isometry) is bounded in the iid setting with probability at least $(1 - \epsilon^x)$ by
\begin{align}
F \geq 1 - \alpha \Big[ \frac{2\epsilon}{q} + \epsilon \Big],
\end{align}
and in the non-iid setting with probability at least $ (1-\epsilon^x)(1 - \sqrt{ \alpha [ \frac{2\epsilon}{q} + \frac{\epsilon}{2} + \frac{4q^2 x \epsilon \log{\frac{1}{\epsilon}} + 2 \epsilon^2}{8 q^2 x \log{\frac{1}{\epsilon}} +  \epsilon^2} ] })$ by
\begin{align}
F \geq 1 - \sqrt{ \alpha \Big[ \frac{2\epsilon}{q} + \frac{\epsilon}{2} + \frac{4q^2 x \epsilon \log{\frac{1}{\epsilon}} + 2 \epsilon^2}{8 q^2 x \log{\frac{1}{\epsilon}} +  \epsilon^2} \Big]  },
\end{align}
\label{th:fid}
where $\alpha = 1.26$.
\end{manualtheorem}
\begin{proof}
Our proof extends techniques from \cite{Gheorghiu, Hajdusek}. Let us assume Alice and Bob share $K$ Bell pairs, partitioned into $n$ pairs used for testing, and 1 pair used for teleportation. For each Bell pair, Alice and Bob measure either $X \otimes X$ or $Z \otimes Z$. Note that although we use an untested pair for teleportation, we will include this in our analysis as a `hypothetical measurement'. Since half of the pairs are measured in $X \otimes X$ and the other half in $Z \otimes Z$ in each test round, we have the number of tested pairs measured in either basis as $n_{XX} = n_{ZZ} = \frac{n}{2}$.

For now, we will assume that a pair measured in $X \otimes X$ is used for teleportation and the others for testing, and the pairs measured in $Z \otimes Z$ are only used for testing. We will see later that this assumption does not affect the results.
We also partition $K$ into $K_{XX}$ and $K_{ZZ}$, which are the total number of pairs measured in each respective basis. 
Thus we have $
K_{XX} = n_{XX}  + 1  = \frac{n + 2}{2}, \text{      }
K_{ZZ}  = n_{ZZ}   = \frac{n}{2} $.

The \textit{ideal correlation} for a pair measured in the basis $X \otimes X$ is given by $\mu_{XX} = \expval{\sigma_X \otimes \sigma_X}{\overline{\psi}}$, and similarly for $Z \otimes Z$. 
Note that $\mu_{XX} + \mu_{ZZ} = 2$. 

The \textit{measured correlation} in round $i$ is denoted by $\hat{C}_i = a_i b_i$, where $a_i, b_i$ are Alice and Bob's measurement outcomes which are either $\pm 1$. 
The average measured correlation can be written in terms of deviation from the ideal correlation, denoted by $\epsilon_{XX}, \epsilon_{ZZ}$ for the tested pairs, and $\epsilon'_{XX}$ for the pair $r$ used for teleportation, as
\begin{align}
\frac{1}{K_{XX}} \sum_{i = 1}^{K_{XX}} \hat{C}_i & =  \frac{1}{K_{XX}} \Big[ \sum_{i=1}^{n_{XX}}  \hat{C}_i  + \hat{C}_{r} \Big] \\ \label{eq:eqxx}
& =  \frac{1}{K_{XX}} \Big[ n_{XX} (\mu_{XX} - \epsilon_{XX})  + \mu_{XX} - \epsilon'_{XX}  \Big],  \\ 
\frac{1}{K_{ZZ}} \sum_{i = 1}^{K_{ZZ}} \hat{C}_i  & =  \frac{1}{K_{ZZ}} \Big[ \sum_{i=1}^{n_{ZZ}}  \hat{C}_i   \Big]  \\
& =  \frac{1}{K_{ZZ}} \Big[  n_{ZZ} (\mu_{ZZ} - \epsilon_{ZZ} ) \Big].
\label{eq:eqzz}
\end{align}

Using our tested pairs, the measured correlation showed a deviation $\epsilon$ from the ideal correlation,
giving $\epsilon = \epsilon_{XX} + \epsilon_{ZZ}$. 
We take $\epsilon_{XX} = \epsilon_{ZZ} = \frac{\epsilon}{2}$, but any other choice does not have a significant effect on our results. We will discuss the hypothetical deviation separately for the iid and non-iid cases.

The \textit{true correlation}, or expectation value, in a round $i$ where $X \otimes X$ is measured is denoted by $C_i = \text{tr} (X \otimes X \rho_i)$, and similarly for $Z \otimes Z$, where $\rho_i$ is the shared state in that round. 

Given the $\epsilon$-closeness between the ideal and measured correlation, we now compute the closeness between the ideal and true correlation over the Bell pairs measured in $X \otimes X$ and $Z \otimes Z$ separately. We can then apply our self-testing result from Theorem \ref{th:selftestingbounds} to bound the fidelity of the state. 
At first, we will assume iid, and then remove this assumption for the most general scenario.

\subsubsection{1. iid setting}

In the iid setting, the untrusted components are assumed to behave the same way in each round. 
This implies that the hypothetical correlation of the untested pair will be the same as a tested pair ($\epsilon'_{XX} = \epsilon_{XX}$). Note that this would be the same if we had used a pair measured in $Z \otimes Z$ for the teleportation.
Substituting in Equations (\ref{eq:eqxx}) and (\ref{eq:eqzz}), we get
\begin{align}
\frac{1}{K_{XX}} \sum_{i = 1}^{K_{XX}} \hat{C}_i & =  \mu_{XX} - \frac{\epsilon}{2}, \\
 \frac{1}{K_{ZZ}} \sum_{i = 1}^{K_{ZZ}} \hat{C}_i & = \mu_{ZZ} -  \frac{\epsilon}{2}.
\end{align}
Now, let us consider $X \otimes X$. We start by defining a variable $W_j 
= \sum_{i=1}^j (\hat{C}_i - C_i)$,  where  $j \in \{ 0, 1, ... , K_{XX} \}$.
We then use the Chernoff-Hoeffding bound \cite{Chernoff1952, Hoeffding1963} for approximating the expectation value of independent random variables. We have $g_i \leq \hat{C}_i \leq h_i$, where $g_i = -1, h_i = 1$. This gives 
\begin{align}
\text{Pr} (W_{K_{XX}} \geq \gamma) &  \leq   \exp(-\frac{2\gamma^2}{\sum_{i=1}^{K_{XX}} (h_i - g_i)^2}) \\
& \leq  \exp(-\frac{\gamma^2}{2 K_{XX}}).
\end{align}
We choose $\gamma = K_{XX} \epsilon$, which gives
\begin{align}
\text{Pr}  \Big(   \frac{1}{K_{XX}} & W_{K_{XX}}  \geq \epsilon \Big)   \nonumber \\
& = \text{Pr} \Big( \frac{1}{K_{XX}} \big[ \sum_{i = 1}^{K_{XX}} \hat{C}_i - \sum_{i = 1}^{K_{XX}} C_i \big] \geq \epsilon \Big) \\
& = \text{Pr} \Big( \mu_{XX} - \frac{\epsilon}{2} -  \frac{1}{K_{XX}}  \sum_{i = 1}^{K_{XX}} C_i    \geq \epsilon \Big) \\
& \leq  \exp(-\frac{1}{2} K_{XX} \epsilon^2). 
\end{align}
Thus, we have
\begin{align}
\text{Pr} \Big( \frac{1}{K_{XX}}   \sum_{i = 1}^{K_{XX}}  C_i  - \mu_{XX} \geq - \frac{3}{2}\epsilon \Big)  \geq  1 - \exp(-\frac{1}{2} K_{XX} \epsilon^2).
\label{eq:xx}
\end{align}
Similar calculations for $Z \otimes Z$ give us
\begin{align}
\text{Pr}  \Big( \frac{1}{K_{ZZ}}  \sum_{i = 1}^{K_{ZZ}} C_i & - \mu_{ZZ} \geq -\frac{3}{2} \epsilon \Big) \nonumber \\
&  \geq  1 - \exp(-\frac{1}{2} K_{ZZ} \epsilon^2).
\label{eq:zz}
\end{align}
Let $\rho_{avg} = \frac{1}{K} \sum_{i=1}^K \rho_i$ be the average state over all $K$ Bell pairs, including the one used for teleportation. Since the states in each round are identical in the iid setting, $\rho_{avg} = \rho_i$. We define the following averaged true correlations:
\begin{align}
C^{XX}   = \frac{1}{K_{XX}} \sum_{i=1}^{K_{XX}} C_i, \text{    } C^{ZZ}    = \frac{1}{K_{ZZ}} \sum_{i=1}^{K_{ZZ}} C_i.
\end{align}
We now rephrase Equations (\ref{eq:xx}) and (\ref{eq:zz}) as
\begin{align}
\text{Pr} \Big( C^{XX}  - \mu_{XX}   \geq -  \frac{3}{2}\epsilon \Big)  \geq 1 -   \exp(-\frac{1}{4} (n + 2)  \epsilon^2),  
\label{eq:xx2}
\end{align}
\begin{align}
\text{Pr} \Big( C^{ZZ} - \mu_{ZZ}   \geq  - \frac{3}{2} \epsilon \Big) 
 \geq 1 -   \exp(-\frac{1}{4} n  \epsilon^2).  
 \label{eq:zz2}
\end{align}
The true correlation for the averaged state when measured in the $X \otimes X$ basis is given by $C^{XX} = \text{tr} (X \otimes X \rho_{avg})$, and similarly for $Z \otimes Z$. Combining Equations (\ref{eq:xx2}) and (\ref{eq:zz2}), we get
\begin{align}
C^{XX} + C^{ZZ} - (\mu_{XX} + \mu_{ZZ}) & \geq - \frac{3}{2}\epsilon - \frac{3}{2} \epsilon \\
C^{XX} + C^{ZZ} & \geq 2 - 3\epsilon \\
& \geq 2 - \delta,
\end{align}
with probability $\geq \big[ 1 -   \exp(-\frac{1}{4} (n+2) \epsilon^2) \big] \big[ 1 -   \exp(-\frac{1}{4} n \epsilon^2) \big] \geq  1 -   \exp(-\frac{1}{4} n \epsilon^2) $, and where $\delta =  3\epsilon$.

If $\ket{\zeta}$ is a purification of $\rho_{avg}$, we get
\begin{align}
\expval{X \otimes X }{\zeta} + \expval{Z \otimes Z}{\zeta} \geq 2 - \delta,
\end{align}
with probability $\geq 1 -  \exp(-\frac{1}{4} n \epsilon^2)$. 
Our self-testing result in Theorem \ref{th:selftestingbounds} tells us that given such a correlation in expectation values, we have for $\alpha = 1.26$, 
\begin{align}
 F(\text{tr}_B [\Phi(\ket{\zeta}\bra{\zeta})], \ket{\overline{\psi}}) \geq 1 - \alpha \delta,
\end{align}
with probability $\geq 1 -  \exp(-\frac{1}{4} n \epsilon^2)$. 
This gives
\begin{align}
F(\text{tr}_B [\Phi(\rho_{avg})], \ket{\overline{\psi}}) \geq 1 - \alpha \delta,
\end{align}
with probability $\geq 1 -  \exp(-\frac{1}{4} n \epsilon^2)$.
If we set $n = \frac{4}{\epsilon^2} \log{\frac{1}{\epsilon}}$, we get $
F(\text{tr}_B [\Phi(\rho_{avg})], \ket{\overline{\psi}}) \geq 1 - \alpha \delta$,
with probability $\geq 1 - \epsilon$.
We rewrite this as follows, recalling that $\rho_{avg} = \rho_i$ in the iid setting: 
\begin{align}
F(\text{tr}_B [\Phi(\rho_{i})], \ket{\overline{\psi}})   \geq 1 - \alpha [3 \epsilon],
\end{align}
with probability $\geq 1 - \epsilon$.
The total number of pairs we need in our protocol is then $
K = \frac{4}{\epsilon^2} \log{\frac{1}{\epsilon}} + 1$.

Since our focus is on developing a protocol that can be implemented in the lab, we will now further optimise this analysis given our experimental limitations. If it is very difficult for us to observe a high violation of the steering inequality, but we are not limited in the number of pairs we can generate, then we can modify the above analysis as follows.
We now choose $\gamma = K_{XX} \frac{\epsilon}{q}$ in our Chernoff-Hoeffding bound, which gives
\begin{align}
\text{Pr} \Big( \frac{1}{K_{XX}} W_{K_{XX}} \geq \frac{\epsilon}{q} \Big) & \leq  \exp(-\frac{1}{2q^2} K_{XX} \epsilon^2).
\end{align}
Going through the calculations as above, we obtain
\begin{align}
\text{Pr} \Big( \frac{1}{K_{XX}}  \sum_{i = 1}^{K_{XX}} C_i & - \mu_{XX} \geq -\frac{\epsilon}{q} - \frac{\epsilon}{2} \Big) \nonumber \\
& \geq  1 - \exp(-\frac{1}{2q^2} K_{XX} \epsilon^2).
\end{align}
Similarly for $Z \otimes Z$, we get
\begin{align}
\text{Pr} \Big( \frac{1}{K_{ZZ}}  \sum_{i = 1}^{K_{ZZ}} C_i & - \mu_{ZZ} \geq -\frac{\epsilon}{q} - \frac{\epsilon}{2}  \Big) \nonumber \\
& \geq  1 - \exp(-\frac{1}{2q^2} K_{ZZ} \epsilon^2).
\end{align}
This gives us $C^{XX} + C^{ZZ} \geq 2 - \delta$ with probability  $\geq 1 - \exp(-\frac{1}{4q^2} n \epsilon^2)$,  where
$\delta = \frac{2\epsilon}{q} + \epsilon$. If we now set $n = \frac{4q^2 x}{\epsilon^2} \log{\frac{1}{\epsilon}}$, this gives us
\begin{align}
F(\text{tr}_B [\Phi(\rho_{i})], \ket{\overline{\psi}})  \geq 1 - \alpha \Big[ \frac{2\epsilon}{q} + \epsilon \Big],
\end{align}
with probability $\geq 1 - \epsilon^x$.
The total number of pairs we need in our protocol is then $
K = \frac{4q^2 x}{\epsilon^2} \log{\frac{1}{\epsilon}} + 1$. 
Note that as $q, x \rightarrow \infty$, the expression for fidelity reduces to the self-testing result of $F \geq 1 - \alpha \epsilon$ with probability 1. 

\subsubsection{2. Non-iid setting}

Now, we no longer assume the same behaviour throughout the rounds. 
We first determine the hypothetical correlation of the untested pair. We consider the worst case scenario i.e. the maximum possible error. Since $\mu_{XX} = 1$, this then gives $\epsilon'_{XX} = 2$.
Note that we get this same value if we had used the $Z \otimes Z$ pair for teleportation.
Substituting in Equations (\ref{eq:eqxx}) and (\ref{eq:eqzz}), we get
\begin{align}
\frac{1}{K_{XX}} \sum_{i = 1}^{K_{XX}} \hat{C}_i & =  \mu_{XX} - \Big[ \frac{8+n\epsilon}{2n+4} \Big], \\
 \frac{1}{K_{ZZ}} \sum_{i = 1}^{K_{ZZ}} \hat{C}_i & = \mu_{ZZ} -  \frac{\epsilon}{2}.
\end{align}

The true correlation $C_i$ now depends on the history of the measurements made, which we denote $H_i$, and so we can also write this as
\begin{align}
C_i  = \text{Pr} (a_i = b_i  | H_i) - \text{Pr} (a_i \neq b_i | H_i).
\end{align}
We have $\abs{W_{j+1} - W_j} \leq 2$, since $\hat{C}_i = \pm 1, -1 \leq C_i \leq 1$. The expectation value of $W_j$ is finite, and the conditional expected value $E(W_{j+1} | H_j) = W_j$. Thus, $\{ W_j \}$ is a martingale. We can use the Azuma-Hoeffding inequality \cite{Azuma1967, Hoeffding1963} here, which gives us for $\abs{W_{j+1} - W_j} \leq d_i$:
\begin{align}
\text{Pr} (W_{K_{XX}} \geq \gamma) &  \leq   \exp(-\frac{\gamma^2}{2 \sum_{i=1}^{K_{XX}} d_i^2}) \\
& \leq  \exp(-\frac{\gamma^2}{8 K_{XX}}).
\end{align}
We again choose $\gamma = K_{XX} \epsilon$, which gives
\begin{align}
\text{Pr} \Big( \frac{1}{K_{XX}} W_{K_{XX}} \geq \epsilon \Big) & \leq  \exp(-\frac{1}{8} K_{XX} \epsilon^2),
\end{align}
\begin{align}
\text{Pr} \Big( \frac{1}{K_{XX}}   \sum_{i = 1}^{K_{XX}} C_i & - \mu_{XX} \geq - \frac{3n\epsilon + 8 +  4 \epsilon}{2n+4} \Big) \nonumber \\
&  \geq  1 - \exp(-\frac{1}{8} K_{XX} \epsilon^2).
\end{align}
For $Z \otimes Z$, we get
\begin{align}
\text{Pr} \Big( \frac{1}{K_{ZZ}}  \sum_{i = 1}^{K_{ZZ}} C_i - \mu_{ZZ} \geq -\frac{3}{2} \epsilon \Big) & \geq  1 - \exp(-\frac{1}{8} K_{ZZ} \epsilon^2).
\end{align}
We substitute for $K_{XX}, K_{ZZ}$ and obtain $
\expval{X \otimes X }{\zeta} + \expval{Z \otimes Z}{\zeta} \geq 2 - \delta$
with probability $\geq 1 -  \exp(-\frac{1}{16} n \epsilon^2)$, where $\delta =  \frac{3n\epsilon + 4 + 5\epsilon}{n+2}$.  

If we now set $n = \frac{16}{\epsilon^2} \log{\frac{1}{\epsilon}}$ and use our self-testing result from Theorem \ref{th:selftestingbounds}, we get
\begin{align}
 F(\text{tr}_B [\Phi(\rho_{avg})], \ket{\overline{\psi}}) \geq 1 - \alpha \delta,
 \end{align}
with probability $\geq 1 - \epsilon$.

Note that here, $\rho_{avg}$ is not equal to $\rho_i$, since we do not assume the source behaves the same way in each round. Thus, in order to characterise the state in any round, we will extend Lemma 5 of \cite{Gheorghiu} as follows. 

\begin{manuallemma}{4}
Let $\rho_{avg} = \frac{1}{K} \sum_{i=1}^K \rho_i$. If, for some pure state $\ket{\Psi}$, we have that
\begin{align}
F(\rho_{avg}, \ket{\Psi}) \geq 1 - \eta,
\end{align}
then for a uniformly at random chosen $i \in \{ 1, ..., K\}$, with probability at least $1 - \sqrt{\eta}$ we can bound the fidelity by
\begin{align}
F(\rho_i, \ket{\Psi}) \geq 1 - \sqrt{\eta}.
\end{align}
\label{l:avg}
\end{manuallemma}
\begin{proof}
We can write $F(\rho_{avg}, \ket{\Psi})  \geq 1 - \eta$ as
\begin{align}
\frac{1}{K} \sum_{i=1}^K F(\rho_i, \ket{\Psi})  \geq 1 - \eta.
\end{align}
We wish to find a bound on any $F(\rho_i, \ket{\Psi})$, given this bound on the average $F(\rho_i, \ket{\Psi})$. Let us take $p$ to be the fraction of $i$'s such that $F(\rho_i, \ket{\Psi}) \leq 1 - \eta - \beta$, and $(1-p)$ to be the fraction of $i$'s such that $F(\rho_i, \ket{\Psi}) \geq 1 - \eta - \beta$. We take this to be the worst case scenario, so $F(\rho_i, \ket{\Psi}) = 1$ with probability $(1-p)$. We can then  write
\begin{align}
\frac{1}{K} \sum_{i=1}^K F(\rho_i, \ket{\Psi}) & \leq p \times (1 - \eta - \beta) + (1-p) \times 1 \\
1 - \eta & \leq p \times (1 - \eta - \beta) + 1 - p.
\end{align}
This gives $p \leq \frac{\eta}{\eta + \beta}$, and $1 - p \geq \frac{\beta}{\eta+ \beta}$.
Thus, if we choose a random $i$, then with probability $\geq \frac{\beta}{\eta + \beta}$ its fidelity can be lower-bounded by $F(\rho_i, \ket{\Psi}) \geq 1 - \eta - \beta$. We choose $\beta = \sqrt{\eta} - \eta$, which gives
\begin{align}
F(\rho_i, \ket{\Psi}) \geq 1 - \sqrt{\eta}, 
\end{align}
with probability $\geq 1 - \sqrt{\eta}$.
\end{proof}

\begin{figure*}[ht]
\centering
\includegraphics[trim = 0cm 12cm 0cm 0.5cm, width=\textwidth]{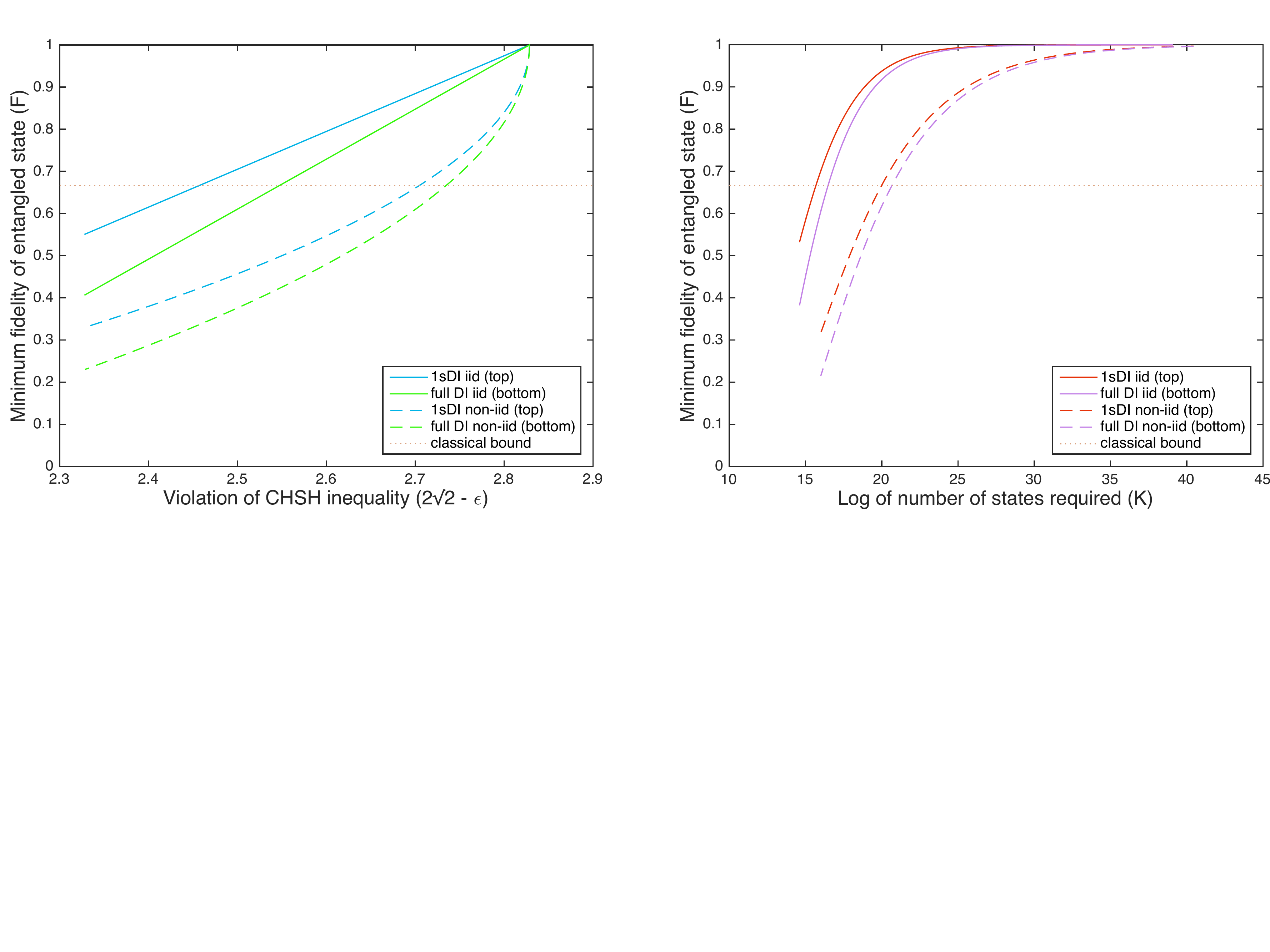}
\caption{\small Certifying fidelities of the entangled state for 1sDI and DI authenticated teleportation, in the iid (solid line) and non-iid (dashed line) settings, requires (a) a CHSH inequality violation of $2\sqrt{2}-\epsilon$, and (b) at least $K$ states. The classical bound (dotted line) corresponds to a fidelity of $\frac{2}{3}$.}
\label{fig:di}
\end{figure*}

Applying Lemma \ref{l:avg} to our scenario, we obtain
\begin{align}
F(\text{tr}_B [\Phi(\rho_i)], \ket{\overline{\psi}}) \geq 1 - \sqrt{\alpha \delta},
\end{align}
with probability $\geq (1-\epsilon)(1 - \sqrt{ \alpha \delta })$.

Writing $\delta$ in terms of $\epsilon$, we have our final result that the state $\rho_i$ in any round, including the one used for teleportation, is such that 
\begin{align}
F(\text{tr}_B [\Phi( \rho_i )], \ket{\overline{\psi}}) 
 &  \geq 1  - \sqrt{ \alpha \Big[  \frac{48\epsilon\log{\frac{1}{\epsilon}} + 4\epsilon^2 + 5\epsilon^3}{16\log{\frac{1}{\epsilon}} + 2\epsilon^2} \Big] },
 \end{align}
 with probability $ \geq (1-\epsilon)(1  - \sqrt{ \alpha \Big[  \frac{48\epsilon\log{\frac{1}{\epsilon}} + 4\epsilon^2 + 5\epsilon^3}{16\log{\frac{1}{\epsilon}} + 2\epsilon^2} \Big] }).$
 
 The total number of pairs we need in our protocol is then $K = \frac{16}{\epsilon^2} \log{\frac{1}{\epsilon}} + 1$.

If we wish to optimise this further given our experimental limitations, we can do a similar modification to the above analysis as in the iid setting, which gives us
\begin{align}
F( \text{tr}_B [\Phi(\rho_i)],  \ket{\overline{\psi}} )
& \geq 1  - \sqrt{ \alpha \Big[ \frac{2\epsilon}{q} + \frac{\epsilon}{2} + \frac{4q^2 x \epsilon \log{\frac{1}{\epsilon}} + 2 \epsilon^2}{8 q^2 x \log{\frac{1}{\epsilon}} +  \epsilon^2} \Big] },
\end{align}
with probability 
$\geq (1-\epsilon^x)( 1  - \sqrt{ \alpha \Big[ \frac{2\epsilon}{q} + \frac{\epsilon}{2} + \frac{4q^2 x \epsilon \log{\frac{1}{\epsilon}} + 2 \epsilon^2}{8 q^2 x \log{\frac{1}{\epsilon}} +  \epsilon^2} \Big] } ). 
$

The total number of pairs used in the protocol is $K = \frac{16q^2 x}{\epsilon^2} \log{\frac{1}{\epsilon}} + 1$.
\end{proof}

\section{Appendix C: Results using the CHSH inequality}
\label{sec:reschsh}

In the 1sDI setting of Protocol 1, the parties can also use the CHSH inequality instead of the steering inequality. We then state the analogous statement to Theorem \ref{th:fid} and Corollary 3 here, for which the proof follows by the same method as above.

\begin{manualtheorem}{5}
Alice and Bob perform Protocol 1 but using the CHSH inequality. 
In the iid setting, if they share $\frac{8q^2 x}{\epsilon^2} \log{\frac{1}{\epsilon}} + 1$ pairs, then the fidelity of teleportation is bounded with probability at least $(1-\epsilon^x)$ by
\begin{align}
F
 \geq 1 - \alpha \Big[ \frac{4\epsilon}{q} + \epsilon \Big],
\end{align}
and in the non-iid setting, if they share $\frac{32q^2x}{\epsilon^2} \log{\frac{1}{\epsilon}} +1$ pairs, then the fidelity of teleportation is bounded with probability at least
$ (1-\epsilon^x)(1 - \sqrt{ \alpha \Big[ \frac{4\epsilon}{q} + \frac{3\epsilon}{4} + \frac{4q^2 x \epsilon \log{\frac{1}{\epsilon}} + (2 +  \sqrt{2}) \epsilon^2}{16 q^2 x \log{\frac{1}{\epsilon}} + 2 \epsilon^2} \Big]})$ by
\begin{align}
F
& \geq 1 - \sqrt{ \alpha \Big[ \frac{4\epsilon}{q} + \frac{3\epsilon}{4} + \frac{4q^2 x \epsilon \log{\frac{1}{\epsilon}} + (2 +  \sqrt{2}) \epsilon^2}{16 q^2 x \log{\frac{1}{\epsilon}} + 2 \epsilon^2} \Big]},
\end{align}
where $\alpha = 0.90$. 
\label{th:chsh1sdi}
\end{manualtheorem}

Note that we do not use the fact that one party's devices may be trusted apart from when we apply the SDP result. Thus, in the DI setting, Theorem \ref{th:chsh1sdi} holds with $\alpha = 1.19$, using the same number of pairs as specified there. This can be used to do a fully device-independent test of the Bell pair, useful for various applications. 

In Figure \ref{fig:di}, we compare the parameters required in the 1sDI and DI trust scenarios, under the iid and non-iid assumptions. In the 1sDI case, from Corollary 3, this fidelity bound holds for the fidelity of teleportation. In the DI case, one must also consider Alice's Bell state measurement in order to make such a statement; thus, this remains a bound on the entangled state.

\end{document}